\newcommand{\FF}{\mathbb{F}}
\newcommand{\0}{\mathbf{0}}
\DeclareMathOperator{\wt}{wt}
\DeclareMathOperator{\Hull}{Hull}
\DeclareMathOperator{\rank}{rank}
\newenvironment{smpmat}{\left ( \begin{smallmatrix}} {\end{smallmatrix}\right )}
\theoremstyle{plain}
\newtheorem{thm}{Theorem}[section]
\newtheorem{lem}[thm]{Lemma}
\newtheorem{cor}[thm]{Corollary}
\newtheorem{prp}[thm]{Proposition}
\theoremstyle{definition}
\theoremstyle{remark}
\newtheorem{rem}[thm]{Remark}
\crefname{thm}{Theorem}{Theorems}
\crefname{lem}{Lemma}{Lemmas}
\crefname{cor}{Corollary}{Corollaries}
\crefname{prp}{Proposition}{Propositions}
\crefname{section}{Section}{Sections}
\crefname{table}{Table}{Tables}
\title{Construction of quaternary Hermitian LCD codes}
\author{Keita Ishizuka\thanks{Corresponding author. Research Center for Pure and Applied Mathematics Graduate School of Information Sciences, Tohoku University, Sendai 980–8579, Japan.\ email: \texttt{keita.ishizuka.p5@dc.tohoku.ac.jp}}}
\date{}
\begin{document}
\maketitle{}
\begin{abstract}
    We introduce a general construction of many Hermitian LCD $[n,k]$ codes from a given Hermitian LCD $[n,k]$ code.
    Furthermore, we present some results on punctured codes and shortened codes of quaternary Hermitian LCD codes.
    As an application, we improve some of the previously known lower bounds on the largest minimum weights of quaternary Hermitian LCD codes of length $12 \le n \le 30$.
\end{abstract}

Keywords: Hermitian LCD code, Hermitian hull, Optimal Hermitian LCD code, Hermitian self-dual code.

\section{Introduction} \label{sec:introduction}
Linear complementary dual codes, LCD codes for short, are codes that intersect their dual codes trivially.
Massey~\cite{massey_linear_1992} introduced Euclidean LCD codes and showed that Euclidean LCD codes provide an optimum linear coding solution for the two-user binary adder channel.
Since then, much work has been done concerning both Euclidean LCD codes and Hermitian LCD codes.
Carlet and Guilley~\cite{carlet_complementary_2016} applied binary Euclidean LCD codes in side-channel attacks and fault non-invasive attacks.
Lu, Li, Guo and Fu~\cite{lu_maximal_2015} proved that a quaternary Hermitian LCD code leads to a construction of a maximal-entanglement entanglement-assisted quantum error correcting code.
In particular, Carlet, Mesnager, Tang, Qi and Pellikaan~\cite{carlet_linear_2018} showed that any code over $\FF_q$ is equivalent to some Euclidean LCD code for $q \ge 4$ and any code over $\FF_{q^2}$ is equivalent to some Hermitian LCD code for $q \ge 3$.
This motivates us to study the largest minimum weights of quaternary Hermitian LCD codes.
We say that a quaternary Hermitian LCD $[n,k]$ code is optimal if it has the largest minimum weight among all quaternary Hermitian LCD $[n,k]$ codes.

We discuss the recent research progress on the largest minimum weights of quaternary Hermitian LCD codes as follows.
Let $d_4(n,k)$ denote the largest minimum weight of quaternary Hermitian LCD $[n,k]$ codes.
For $n \in \{1, 2, \dots, 11 \}$ and an arbitrary value of $k$, Lu, Zhan, Yang and Cao~\cite{lu_optimal_2020} determined $d_4(n,k)$.
For an arbitrary value of $n$ and $k=2$, Lu, Li, Guo and Fu~\cite{lu_maximal_2015} determined $d_4(n,k)$.
Furthermore, for an arbitrary value of $n$ and $k=3$, the value of $d_4(n,k)$ were determined by Lu, Li, Guo and Fu~\cite{lu_maximal_2015} and Araya, Harada and Saito~\cite{araya_quaternary_2020}.
Zhan,Li, Lu and Li~\cite{zhan_quatemary_2020} constructed quaternary Hermitian LCD codes with minimum weight $d=3,4,5,6$ of length $n<52$.
Sok~\cite{sok_hermitian_2020} used unitary matrices to construct many quaternary Hermitian LCD codes.

In this paper, we introduce a method for constructing many $[n,k]$ codes from a given $[n,k]$ code which preserves the dimension of the Hermitian hull, extending the result by Ishizuka and Saito~\cite{ishizuka_construction_2022}.
Although we apply the method only to quaternary Hermitian LCD codes, it can be applied to both Hermitian self-dual codes and Hermitian LCD codes over $\FF_{q^2}$ with any prime power $q$.
Furthermore, we provide some theorems concerning punctured codes and shortened codes of quaternary Hermitian LCD codes, motivated by the result of Bouyuklieva~\cite{bouyuklieva_optimal_2021} on binary Euclidean LCD codes.
As an application, we construct quaternary Hermitian LCD codes and improve some of the previously known lower bounds on $d_4(n,k)$ for $12 \le n \le 30$.

This paper is organized as follows:
In \cref{sec:preliminaries}, we give some definitions, notations and basic results.
In \cref{sec:construction}, we introduce the construction method.
In \cref{sec:puncture}, we give some theorems on punctured codes and shortened codes of quaternary Hermitian LCD codes.
In \cref{sec:optimal}, applying the results, we construct quaternary Hermitian LCD codes to improve some of the previously known lower bounds.

\section{Preliminaries} \label{sec:preliminaries}

Let $\FF_q$ be the finite field of order $q$, where $q$ is a prime power and let $\FF_q^n$ be the vector space of all $n$-tuples over $\FF_q$.
Especially, let $\FF_4 = \{0,1,\omega,\omega^2\}$ be the finite field of order four, where $\omega$ satisfies $\omega^2+\omega+1=0$.
An $[n,k]$ code over $\FF_q$ is a $k$-dimensional subspace of $\FF_q^n$.
Codes over $\FF_4$ are said to be quaternary codes.
For an $[n,k]$ code $C$ over $\FF_q$, the parameters $n$ and $k$ are said to be the length and the dimension of $C$, respectively.
The elements of a code are known as codewords.
The weight of $x=(x_1,x_2,\dots,x_n) \in \FF_q^n$ is defined as $\wt(x) = |\{ i \mid x_i \ne 0 \}|$.
The minimum weight of a code $C$ is defined as $\wt(C) = \min \{ \wt(x) \mid x \in C, x \neq \0_n \}$, where $\0_n$ denotes the zero vector of length $n$.
An $[n,k,d]$ code over $\FF_q$ is an $[n,k]$ code with minimum weight $d$.
We say that a code is even if every codeword has even weight.
If a code is not even, then it is said to be odd.

Let $C$ be a code over $\FF_q$ with length $n$ and let $T \subseteq \{ 1, 2, \dots, n \}$.
Deleting every coordinate $i \in T$ in every codeword of $C$ gives a code called the punctured code on $T$ and denoted by $C^T$.
Define a subcode $C(T) = \{ (c_1,c_2,\dots,c_n) \in C \mid c_i=0 \text{ for all } i \in T \}$.
Puncturing $C(T)$ on $T$ gives a code called the shortened code on $T$ and denoted by $C_T$.
If $|T|=1$, say $T=\{i\}$, then we will write $C^{\{i\}}$ and $C_{\{i\}}$ as $C^i$ and $C_i$, respectively.

For $a \in \FF_{q^2}$, the conjugate of $a$ is defined as $\overline{a} = a^q$.
The Hermitian inner product of $x=(x_1,x_2,\dots,x_n), y=(y_1,y_2,\dots,y_n) \in \FF_{q^2}^n$ is defined as $(x,y)_h = \sum_{i=1}^n x_i \overline{y_i}$.
For any $[n,k]$ code $C$ over $\FF_{q^2}$, the Hermitian dual code of $C$ is defined as $C^{\perp h} = \{ x \in \FF_{q^2}^n \mid (x, y)_h = 0 \text{ for all } y \in C \}$.
Given a code $C$, we denote by $d^{\perp h}$ the minimum weight of $C^{\perp h}$.
A code $C$ over $\FF_{q^2}$ is said to be a Hermitian LCD code if $C \cap C^{\perp h} = \{ \0_n \}$.
We say that a Hermitian LCD $[n,k]$ code over $\FF_{q^2}$ is optimal if it has the largest minimum weight among all Hermitian LCD $[n,k]$ codes over $\FF_{q^2}$.
A code $C$ over $\FF_{q^2}$ is said to be a Hermitian self-orthogonal code if $C \subseteq C^{\perp h}$.
If $C$ satisfies $C=C^{\perp h}$, then we say that $C$ is a Hermitian self-dual code.
The Hermitian hull of a code $C$ is defined as $\Hull_h(C) = C \cap C^{\perp h}$.
By definition, Hermitian self-dual codes and Hermitian LCD codes are maximum and minimum with respect to the dimensions of the Hermitian hulls, respectively.
The construction method introduced in this paper preserves the dimension of the Hermitian hull.
Thus, it can be applied to constructions of Hermitian self-dual codes and Hermitian LCD codes over $\FF_{q^2}$ for any prime power $q$.
In this paper, however, we apply the method only to quaternary Hermitian LCD codes.

A generator matrix of a code $C$ is any matrix whose rows form a basis of $C$.
For a matrix $A$, we denote by $A^T$ and $\overline{A}$ the transpose and the conjugate of $A$, respectively.
The following results will be important in \cref{sec:construction,sec:puncture}.
\begin{lem}[{\cite[Proposition 3.5]{guneri_quasi-cyclic_2016}}] \label{thm:guneri2016}
    Let $C$ be an $[n,k]$ code over $\FF_{q^2}$ with generator matrix $G$.
    Then $C$ is a Hermitian LCD code if and only if the $k \times k$ matrix $G \overline{G}^T$ is nonsingular.
\end{lem}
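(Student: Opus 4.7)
The plan is to translate the Hermitian LCD condition into a statement about the kernel of a linear map whose matrix is $G\overline{G}^T$, and then read off the equivalence. Writing $G$ as a $k \times n$ matrix whose rows form a basis of $C$, every codeword of $C$ has the unique form $xG$ for some row vector $x \in \FF_{q^2}^k$, and the map $x \mapsto xG$ is injective because $G$ has full row rank $k$.

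Next, I would characterize membership in $C^{\perp h}$. For $x, y \in \FF_{q^2}^k$, a direct computation gives
\begin{equation*}
    (xG, yG)_h \;=\; xG\,\overline{yG}^T \;=\; x\bigl(G\overline{G}^T\bigr)\overline{y}^T.
\end{equation*}
Since $y$ ranges over all of $\FF_{q^2}^k$, the vector $\overline{y}^T$ ranges over all column vectors in $\FF_{q^2}^k$, so $xG \in C^{\perp h}$ if and only if $x(G\overline{G}^T) = \0_k$. Hence
\begin{equation*}
    \Hull_h(C) \;=\; \{\, xG : x \in \FF_{q^2}^k \text{ with } x(G\overline{G}^T) = \0_k \,\}.
\end{equation*}

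Using injectivity of $x \mapsto xG$, this identifies $\Hull_h(C)$ with the left kernel of $G\overline{G}^T$. Therefore $\Hull_h(C) = \{\0_n\}$ precisely when that left kernel is trivial, which for a square $k \times k$ matrix over a field is equivalent to $G\overline{G}^T$ being nonsingular. Combining the two directions yields the lemma.

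There is no real obstacle here; the argument is a one-line kernel computation. The only mild care needed is bookkeeping of the conjugate and transpose when expanding $(xG, yG)_h$, together with invoking the full row rank of $G$ so that the correspondence between $x$ and $xG$ is a genuine linear isomorphism $\FF_{q^2}^k \to C$.
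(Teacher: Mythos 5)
Your proof is correct. There is nothing in the paper to compare it against: the paper imports this lemma from \cite{guneri_quasi-cyclic_2016} without proof, and your kernel computation is the standard argument for it. Worth noting: your identification of $\Hull_h(C)$ with $\{xG : x(G\overline{G}^T)=\0_k\}$, combined with the injectivity of $x \mapsto xG$, actually establishes the stronger statement $\rank(G\overline{G}^T) = k - \dim(\Hull_h(C))$, which is exactly \cref{thm:guenda2018} (also imported without proof); the present lemma is then the special case $\dim(\Hull_h(C))=0$. So your one computation covers both of the paper's quoted linear-algebra facts.
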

\begin{lem} [{\cite[Theorem 1]{macwilliams_self-dual_1978}}] \label{thm:macwilliams1978}
    A quaternary code is Hermitian self-orthogonal if and only if the code is an even code.
\end{lem}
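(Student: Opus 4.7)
The plan is to reduce both directions to a single arithmetic fact: for every nonzero $a \in \FF_4$, we have $a\bar{a} = a \cdot a^2 = a^3 = 1$. Hence for any $x \in \FF_4^n$,
\[
(x,x)_h = \sum_{i=1}^n x_i \overline{x_i} = \wt(x) \cdot 1_{\FF_4},
\]
which, since $\FF_4$ has characteristic $2$, vanishes precisely when $\wt(x)$ is even. This immediately settles the forward direction: a Hermitian self-orthogonal quaternary code $C$ satisfies $(x,x)_h = 0$ for every $x \in C$, so every codeword has even weight.

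For the converse, suppose every codeword of $C$ has even weight, so $(x,x)_h = 0$ for all $x \in C$. Fix $x, y \in C$. Because $C$ is linear over $\FF_4$, both $x+y$ and $\omega x + y$ lie in $C$, and expanding $(x+y, x+y)_h = 0$ and $(\omega x + y, \omega x + y)_h = 0$ sesquilinearly — using $\omega \bar\omega = 1$, $\bar\omega = \omega^2$, and $(y,x)_h = \overline{(x,y)_h}$ — yields the two equations
\[
(x,y)_h + \overline{(x,y)_h} = 0, \qquad \omega\,(x,y)_h + \omega^2\,\overline{(x,y)_h} = 0.
\]
The first equation says $(x,y)_h$ lies in the kernel of the trace $\mathrm{Tr}_{\FF_4/\FF_2}$, hence $(x,y)_h \in \FF_2$ and so equals its own conjugate. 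Substituting into the second gives $(\omega + \omega^2)(x,y)_h = (x,y)_h = 0$, as required.

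The main subtlety — and essentially the only real obstacle — is that the Hermitian pairing is sesquilinear rather than bilinear, so the naive polarization with $x+y$ alone only yields $(x,y)_h + \overline{(x,y)_h} = 0$, which is far too weak to force $(x,y)_h = 0$ (it only places the pairing in the prime subfield). Introducing the second test vector $\omega x + y$, exploiting that $C$ is closed under multiplication by $\omega$, is what breaks this symmetry and pins the value down to zero.
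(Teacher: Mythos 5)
Your argument is correct. The paper itself gives no proof of this lemma --- it is quoted directly from the cited reference \cite{macwilliams_self-dual_1978} --- so there is nothing in the text to compare against; your write-up simply supplies the standard argument that the citation stands in for. Both halves check out: the identity $a\overline{a}=a^3=1$ for nonzero $a\in\FF_4$ gives $(x,x)_h=\wt(x)\cdot 1$ in characteristic $2$, which settles the forward direction, and your polarization with the two test vectors $x+y$ and $\omega x+y$ correctly handles the sesquilinearity (the point you rightly flag as the only real obstacle: $(x+y,x+y)_h=0$ alone only forces $(x,y)_h$ into the kernel of $\mathrm{Tr}_{\FF_4/\FF_2}$, i.e.\ into $\FF_2$, and the second equation $\omega(x,y)_h+\omega^2\overline{(x,y)_h}=0$ combined with $\overline{(x,y)_h}=(x,y)_h$ yields $(\omega+\omega^2)(x,y)_h=(x,y)_h=0$). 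This is essentially the classical proof, so nothing further is needed.
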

\begin{lem}[{\cite[Proposition 3.3]{guenda_constructions_2018}}] \label{thm:guenda2018}
    Let $C$ be an $[n,k]$ code over $\FF_{q^2}$ with generator matrix $G$.
    Then
    \begin{equation*}
        \rank(G\overline{G}^T) = k - \dim(\Hull_h(C)).
    \end{equation*}
\end{lem}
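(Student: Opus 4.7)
The plan is to identify the Hermitian hull with the left kernel of $G\overline{G}^T$ and then use rank-nullity. Concretely, since $G$ has rank $k$, the map $\varphi\colon \FF_{q^2}^k \to C$ defined by $\varphi(x) = xG$ is a linear isomorphism, so it suffices to describe the preimage $\varphi^{-1}(\Hull_h(C))$.

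First, I would unravel the defining condition for a codeword to lie in $C^{\perp h}$. For $x, y \in \FF_{q^2}^k$, the Hermitian inner product of two codewords is
\begin{equation*}
(xG,\, yG)_h = xG\,\overline{(yG)}^T = xG\,\overline{G}^T\,\overline{y}^T.
\end{equation*}
Thus $xG \in C^{\perp h}$ if and only if $xG\,\overline{G}^T\,\overline{y}^T = 0$ for every $y \in \FF_{q^2}^k$, which happens if and only if $xG\,\overline{G}^T = \0_k$. Combined with the tautology $xG \in C$, this gives $\varphi^{-1}(\Hull_h(C)) = \{x \in \FF_{q^2}^k \mid xG\overline{G}^T = \0_k\}$, the left null space of $G\overline{G}^T$.

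Second, because $\varphi$ is an isomorphism, $\dim(\Hull_h(C))$ equals the dimension of this left null space, and rank-nullity applied to the linear map $x \mapsto xG\overline{G}^T$ on $\FF_{q^2}^k$ gives
\begin{equation*}
\dim(\Hull_h(C)) = k - \rank(G\overline{G}^T),
\end{equation*}
which is the desired formula after rearrangement.

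There is no real obstacle here; the only subtlety is bookkeeping with conjugation, namely that the Hermitian pairing $(u,v)_h = u\overline{v}^T$ introduces the factor $\overline{G}^T$ rather than $G^T$, so the relevant Gram matrix is $G\overline{G}^T$ and the kernel in question is genuinely the left null space (equivalently the $\FF_{q^2}$-kernel of right multiplication by $G\overline{G}^T$). Once this is set up correctly, the argument is a one-line application of rank-nullity.
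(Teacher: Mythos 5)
Your proof is correct. Note that the paper does not prove this lemma at all---it is quoted as a known result from Guenda, Jitman and Gulliver \cite{guenda_constructions_2018}---so there is no in-paper argument to compare against; your derivation (identifying $\varphi^{-1}(\Hull_h(C))$ with the left null space of $G\overline{G}^T$ via the computation $(xG,yG)_h = xG\overline{G}^T\overline{y}^T$, then applying rank--nullity) is the standard proof of that cited proposition, and the conjugation bookkeeping is handled correctly.
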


\section{Construction method} \label{sec:construction}
Recently, Harada~\cite{harada_construction_2021} introduced a method to construct many Euclidean LCD $[n,k]$ codes from a given Euclidean LCD $[n,k]$ code, modifying a known method for Euclidean self-dual codes~\cite{harada_existence_1996}.
Extending the result, Ishizuka and Saito~\cite{ishizuka_construction_2022} provided a method to construct many $[n,k]$ code from a given $[n,k]$ code, preserving the dimension of the Euclidean hull.
In this section, we give a construction method similar to that of~\cite{ishizuka_construction_2022}, which preserves the dimension of the Hermitian hull.

Let $C$ be an $[n,k]$ code over $\FF_{q^2}$ with generator matrix $\begin{pmatrix} I_k & A \\ \end{pmatrix}$, where $I_k$ denotes the identity matrix of order $k$ and $A$ is a $k \times (n-k)$ matrix.
We denote by $r_i$ the $i$-th row of $A$.
For $A$ and $x,y \in \FF_{q^2}^{n-k}$, define an $k \times (n-k)$ matrix $A(x,y)$, where the $i$-th row $r'_i$ is defined as follows:
\begin{equation*}
    r'_i = r_i + (r_i,y)_h x - (r_i,x)_h y. \label{eq:ri}
\end{equation*}
We denote by $C(A(x,y))$ the code with generator matrix $\begin{pmatrix} I_k & A(x,y) \\ \end{pmatrix}$.
\begin{rem}
    With the above notation, suppose that $x=\0_{n-k}$ or $y=\0_{n-k}$.
    Then it holds that $A(x,y)=A$.
    Hereafter, we assume that $x \neq \0_{n-k}$ and $y \neq \0_{n-k}$.
\end{rem}
\begin{thm} \label{thm:hulldim}
    Let $C$ be an $[n,k]$ code over $\FF_{q^2}$ with generator matrix $G = \begin{pmatrix} I_k & A \\ \end{pmatrix}$ and let $x,y \in \FF_{q^2}^{n-k}$.
    Suppose that $(x,x)_h=(y,y)_h=(x,y)_h=0$.
    Then $\dim(\Hull_h(C(A(x,y)))) = \dim(\Hull_h(C))$.
\end{thm}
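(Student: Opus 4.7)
The plan is to show the stronger matrix identity $G'\overline{G'}^T = G\overline{G}^T$, where $G' = \begin{pmatrix} I_k & A(x,y) \end{pmatrix}$ is the generator matrix of $C(A(x,y))$; combined with \cref{thm:guenda2018}, this forces the two Hermitian hulls to have the same dimension. Since $G\overline{G}^T = I_k + A\overline{A}^T$ and $G'\overline{G'}^T = I_k + A(x,y)\overline{A(x,y)}^T$, everything boils down to proving the entrywise identity $(r'_i, r'_j)_h = (r_i, r_j)_h$ for all $i,j$, where $r'_i = r_i + (r_i,y)_h\, x - (r_i,x)_h\, y$.

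To do this I would expand $(r'_i, r'_j)_h$ using sesquilinearity of the Hermitian inner product (linear in the first slot, conjugate-linear in the second). This produces nine terms: one pure term $(r_i,r_j)_h$, four terms that contain one of the scalars $(x,x)_h,(x,y)_h,(y,x)_h,(y,y)_h$ as a factor, and four mixed cross terms. The hypothesis $(x,x)_h = (y,y)_h = (x,y)_h = 0$ (together with $(y,x)_h = \overline{(x,y)_h} = 0$) kills the four terms in the middle group immediately.

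The main step is to verify that the four remaining cross terms cancel in pairs. Writing $a=(r_i,x)_h$, $b=(r_i,y)_h$, $c=(r_j,x)_h$, $d=(r_j,y)_h$ and using the basic identity $(x,r_j)_h = \overline{(r_j,x)_h}$, the cross terms simplify to
\begin{equation*}
    \overline{d}\,a - \overline{c}\,b + b\,\overline{c} - a\,\overline{d},
\end{equation*}
which is visibly zero. Hence $(r'_i,r'_j)_h = (r_i,r_j)_h$, so $A(x,y)\overline{A(x,y)}^T = A\overline{A}^T$ and therefore $G'\overline{G'}^T = G\overline{G}^T$.

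Finally, applying \cref{thm:guenda2018} to both $G$ and $G'$ gives $\dim(\Hull_h(C(A(x,y)))) = k - \rank(G'\overline{G'}^T) = k - \rank(G\overline{G}^T) = \dim(\Hull_h(C))$, as required. I do not anticipate a real obstacle: the whole argument is a short sesquilinear calculation whose only subtlety is tracking conjugates carefully when transposing, and the three vanishing hypotheses are exactly what is needed to eliminate the nontrivial terms.
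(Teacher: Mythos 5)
Your proposal is correct and follows essentially the same route as the paper: expand $(r'_i,r'_j)_h$ sesquilinearly, kill the terms containing $(x,x)_h$, $(y,y)_h$, $(x,y)_h$, $(y,x)_h$ by hypothesis, observe that the remaining cross terms $\overline{d}a-\overline{c}b+b\overline{c}-a\overline{d}$ cancel, conclude $G'\overline{G'}^T=G\overline{G}^T$, and apply \cref{thm:guenda2018}. No gaps.
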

\begin{proof}
    We denote by $r_i$ and $r'_i$ the $i$-th rows of $A$ and $A(x,y)$, respectively.
    It holds that
    \begin{equation*}
        \begin{split}
            (r_i',r_j')_h
            = &(r_i + (r_i,y)_h x - (r_i,x)_h y,\ r_j + (r_j,y)_h x - (r_j,x)_h y)_h       \\
            = &(r_i,r_j)_h + \overline{(r_j,y)_h}(r_i,x)_h - \overline{(r_j,x)_h}(r_i,y)_h \\
            &+ (r_i,y)_h(x,r_j)_h - (r_i,x)_h(y,r_j)_h                                       \\
            = &(r_i,r_j)_h.
        \end{split}
    \end{equation*}
    Therefore, it follows that
    \begin{equation*} \label{eq:LCP}
        \begin{split}
            \begin{pmatrix} I_k & A(x,y) \\ \end{pmatrix}
            \overline{\begin{pmatrix} I_k & A(x,y) \\ \end{pmatrix}}^T
            &= I_k + A(x,y)\overline{A(x,y)}^T\\
            &= I_k + A\overline{A}^T\\
            &= \begin{pmatrix} I_k & A \\ \end{pmatrix} \overline{\begin{pmatrix} I_k & A \\ \end{pmatrix}}^T.
        \end{split}
    \end{equation*}
    By \cref{thm:guenda2018}, the result follows.
\end{proof}

\Cref{thm:hulldim} states that the construction method preserves the dimension of the Hermitian hull.
Recall that Hermitian self-dual codes and Hermitian LCD codes are maximum and minimum with respect to the dimensions of the Hermitian hulls, respectively.
Therefore, we obtain \cref{thm:construction_LCD,thm:construction_self-dual}.
\begin{cor} \label{thm:construction_LCD}
    Let $C$ be an $[n,k]$ code over $\FF_{q^2}$ with generator matrix $G = \begin{pmatrix} I_k & A \\ \end{pmatrix}$ and let $x,y \in \FF_{q^2}^{n-k}$.
    Suppose that $(x,x)_h=(y,y)_h=(x,y)_h=0$.
    Then $C(A(x,y))$ is a Hermitian LCD code if and only if $C$ is a Hermitian LCD code.
\end{cor}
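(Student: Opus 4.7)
The plan is to derive the corollary as an immediate consequence of \cref{thm:hulldim}. The key observation is that a code $D$ over $\FF_{q^2}$ is a Hermitian LCD code precisely when $\Hull_h(D) = D \cap D^{\perp h} = \{\0_n\}$, i.e. when $\dim(\Hull_h(D)) = 0$. This characterization is essentially the definition of a Hermitian LCD code, restated in terms of the Hermitian hull, and was already highlighted in \cref{sec:preliminaries} where the paper notes that Hermitian LCD codes are exactly the codes minimal with respect to the dimension of the Hermitian hull.

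Given this characterization, I would proceed as follows. First, I invoke \cref{thm:hulldim}, whose hypotheses $(x,x)_h = (y,y)_h = (x,y)_h = 0$ are exactly those assumed in the corollary, to obtain the equality
\begin{equation*}
    \dim(\Hull_h(C(A(x,y)))) = \dim(\Hull_h(C)).
\end{equation*}
Then I rewrite the LCD property of each side in terms of the hull dimension: $C$ is Hermitian LCD iff $\dim(\Hull_h(C)) = 0$, and similarly $C(A(x,y))$ is Hermitian LCD iff $\dim(\Hull_h(C(A(x,y)))) = 0$. The preserved-dimension equality above immediately yields the desired equivalence.

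Since \cref{thm:hulldim} does all of the actual work (via the identity $A(x,y)\overline{A(x,y)}^T = A\overline{A}^T$ and \cref{thm:guenda2018}), there is essentially no obstacle: the corollary is a two-line reading of the theorem through the lens of the hull-dimension characterization of Hermitian LCD codes. One alternative, purely formal, route would be to bypass \cref{thm:guenda2018} and instead invoke \cref{thm:guneri2016} directly, noting that the computation in the proof of \cref{thm:hulldim} shows $G(A(x,y))\overline{G(A(x,y))}^T = G\overline{G}^T$, so one matrix is nonsingular iff the other is; but the hull-dimension route is cleaner and more in line with the framing of the paper.
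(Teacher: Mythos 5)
Your proposal is correct and matches the paper's own (implicit) argument exactly: the paper derives \cref{thm:construction_LCD} from \cref{thm:hulldim} by observing that Hermitian LCD codes are precisely those with trivial Hermitian hull, so the preservation of $\dim(\Hull_h(\cdot))$ immediately gives the equivalence. Your alternative remark via \cref{thm:guneri2016} is also sound, but the hull-dimension route is what the paper uses.
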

\begin{cor} \label{thm:construction_self-dual}
    Let $C$ be an $[n,k]$ code over $\FF_{q^2}$ with generator matrix $G = \begin{pmatrix} I_k & A \\ \end{pmatrix}$ and let $x,y \in \FF_{q^2}^{n-k}$.
    Suppose that $(x,x)_h=(y,y)_h=(x,y)_h=0$.
    Then $C(A(x,y))$ is a Hermitian self-dual code if and only if $C$ is a Hermitian self-dual code.
\end{cor}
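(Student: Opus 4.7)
The plan is to deduce \cref{thm:construction_self-dual} from \cref{thm:hulldim} in exactly the same way as \cref{thm:construction_LCD}, by rephrasing Hermitian self-duality as a condition on the dimension of the Hermitian hull. Recall that a code $D$ is Hermitian self-dual if and only if $D \subseteq D^{\perp h}$ together with $\dim D = n/2$; equivalently, $\Hull_h(D) = D$.

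First, I would observe that both $C$ and $C(A(x,y))$ are $[n,k]$ codes: each is presented by a generator matrix in systematic form $\begin{pmatrix} I_k & \ast \end{pmatrix}$, so the construction manifestly preserves length and dimension. In particular, the necessary length condition $n = 2k$ for Hermitian self-duality holds for one of the codes if and only if it holds for the other. Second, for any $[n,k]$ code $D$ with $n=2k$, the self-duality condition $D = D^{\perp h}$ is equivalent to the containment $D \subseteq D^{\perp h}$ (by dimension count), which in turn is equivalent to $\Hull_h(D) = D$, and hence to $\dim(\Hull_h(D)) = k$. Applying \cref{thm:hulldim}, which gives $\dim(\Hull_h(C(A(x,y)))) = \dim(\Hull_h(C))$, one immediately obtains the desired equivalence: $C$ is Hermitian self-dual if and only if $C(A(x,y))$ is.

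I do not anticipate any real obstacle; the corollary is essentially a one-line consequence of \cref{thm:hulldim}. The only point worth noting is that the systematic form of the generator matrix automatically preserves the parameters $(n,k)$, so the auxiliary length condition $n=2k$ transfers for free between $C$ and $C(A(x,y))$ and plays no role beyond this bookkeeping.
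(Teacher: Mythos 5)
Your proposal is correct and matches the paper's approach: the paper also derives this corollary directly from \cref{thm:hulldim} together with the observation that Hermitian self-dual codes are exactly those whose Hermitian hull has the maximum possible dimension $k$ (with $n=2k$). Your extra remark that the systematic form preserves $(n,k)$, so the condition $n=2k$ transfers automatically, is a harmless and accurate piece of bookkeeping that the paper leaves implicit.
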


\section{Punctured codes and shortened codes of quaternary Hermitian LCD codes} \label{sec:puncture}

Recently, Bouyuklieva~\cite{bouyuklieva_optimal_2021} gave a characterization of both punctured codes and shortened codes of binary LCD codes.
In this section, we show that similar results hold for quaternary Hermitian LCD codes.

For a code $C$ of length $n$ and a set $T \subseteq \{1,2,\dots,n\}$, define a subcode $C(T) = \{(c_1,c_2,\dots,c_n) \in C \mid c_i = 0 \text{ for all } i \in T \}$.
For Euclidean dual codes, a result similar to~\cref{thm:PperpSperp} is given in~\cite[Theorem 1.5.7]{huffman_fundamentals_2003}.
By an argument similar to that in the proof of~\cite[Theorem 1.5.7]{huffman_fundamentals_2003}, we obtain \cref{thm:PperpSperp}.

\begin{lem} \label{thm:PperpSperp}
    Let $C$ be a code of length $n$ over $\FF_{q^2}$ and $T \subseteq \{1,2,\dots,n\}$.
    Then $(C^{\perp h})_T=(C^T)^{\perp h}$ and $(C^{\perp h})^T=(C_T)^{\perp h}$.
\end{lem}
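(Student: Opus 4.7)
The plan is to prove the first equality $(C^{\perp h})_T = (C^T)^{\perp h}$ directly, and then derive the second equality by applying the first identity with $C^{\perp h}$ in place of $C$ and invoking the standard Hermitian double-dual identity $(C^{\perp h})^{\perp h} = C$.

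For the first equality, I would set up a natural ``zero-pad / restrict'' correspondence between the subspace $(C^{\perp h})(T) \subseteq \FF_{q^2}^n$ of dual codewords vanishing on $T$ and the punctured dual $(C^T)^{\perp h} \subseteq \FF_{q^2}^{n-|T|}$. For the inclusion $(C^{\perp h})_T \subseteq (C^T)^{\perp h}$, take $v \in (C^{\perp h})_T$, which arises as the restriction to the coordinates indexed by $T^c$ of some $\tilde v \in C^{\perp h}$ with $\tilde v_i = 0$ for $i \in T$. Any $w \in C^T$ is likewise the restriction to $T^c$ of some $\tilde w \in C$. Then
\begin{equation*}
    (v, w)_h = \sum_{i \notin T} v_i \overline{w_i} = \sum_{i=1}^{n} \tilde v_i \overline{\tilde w_i} = (\tilde v, \tilde w)_h = 0,
\end{equation*}
where the middle equality uses $\tilde v_i = 0$ on $T$ and the last equality uses $\tilde v \in C^{\perp h}$ and $\tilde w \in C$. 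For the reverse inclusion, given $u \in (C^T)^{\perp h}$, extend it to $\tilde u \in \FF_{q^2}^n$ by zero-padding on $T$; the same calculation shows $(\tilde u, \tilde w)_h = 0$ for every $\tilde w \in C$, so $\tilde u \in C^{\perp h}$, and since $\tilde u$ vanishes on $T$ we have $u \in (C^{\perp h})_T$.

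For the second equality, substituting $C^{\perp h}$ for $C$ in the first identity gives $\bigl((C^{\perp h})^{\perp h}\bigr)_T = \bigl((C^{\perp h})^T\bigr)^{\perp h}$, which, via $(C^{\perp h})^{\perp h} = C$, simplifies to $C_T = \bigl((C^{\perp h})^T\bigr)^{\perp h}$. Taking Hermitian duals of both sides then yields $(C_T)^{\perp h} = (C^{\perp h})^T$.

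I do not foresee a genuine mathematical obstacle, only careful bookkeeping. The delicate point is keeping straight which vectors live in $\FF_{q^2}^n$ and which in $\FF_{q^2}^{n-|T|}$, and distinguishing coordinates in $T$ from those in $T^c$; since the Hermitian form splits coordinate-wise and the conjugation $\overline{\,\cdot\,}$ plays no subtle role, the argument is essentially the Euclidean proof of~\cite[Theorem 1.5.7]{huffman_fundamentals_2003} transcribed with $\overline{w_i}$ in place of $w_i$.
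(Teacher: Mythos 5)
Your proof is correct and follows essentially the same route as the paper's: both establish $(C^{\perp h})_T=(C^T)^{\perp h}$ by the zero-pad/restrict correspondence and the coordinate-wise splitting of the Hermitian form, and both obtain the second identity by substituting $C^{\perp h}$ for $C$ (you simply spell out the double-dual step $(C^{\perp h})^{\perp h}=C$ that the paper leaves implicit). No issues.
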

\begin{proof}
    Let $c \in C^{\perp h}(T)$ and let $c^*$ be the codeword obtained from $c$ by removing the coordinates in $T$.
    If $x \in C$, then $0=(x,c)_h=(x^*,c^*)_h$, where $x^*$ is the codeword $x$ punctured on $T$.
    Hence $(C^{\perp h})_T \subset (C^T)^{\perp h}$.
    Any vector $c \in (C^T)^{\perp h}$ can be extended to a vector $\hat{c}$ by inserting $0$s in the positions of $T$.
    If $x \in C$, then puncture $x$ on $T$ to obtain $x^*$.
    Since $(\hat{c},x)_h=(c,x^*)_h=0$, it follows that $c \in (C^{\perp h})_T$.
    Hence $ (C^{\perp h})_T = (C^T)^{\perp h} $.
    Replacing $C$ by $C^{\perp h}$ gives $(C^{\perp h})^T = (C_T)^{\perp h}$.
\end{proof}

\begin{thm} \label{thm:P_XOR_S}
    Let $C$ be a quaternary Hermitian LCD $[n,k,d]$ code with $d,d^{\perp h} \ge 2$.
    For all $1 \le i \le n$, exactly one of $C^i$ and $C_i$ is a Hermitian LCD code.
\end{thm}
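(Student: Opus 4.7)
The plan is to permute coordinates so that $i = 1$ and compare the determinants of the relevant Gram matrices via \cref{thm:guneri2016}. Because $d^{\perp h} \ge 2$, no standard basis vector lies in $C^{\perp h}$, so the first column of any generator matrix of $C$ is nonzero; after row operations and scaling, $C$ has a generator matrix of the form
\[
G = \begin{pmatrix} 1 & u \\ \0 & G' \end{pmatrix}
\]
with $u \in \FF_4^{n-1}$ and $G' \in \FF_4^{(k-1)\times(n-1)}$. The block $G'$ generates $C_1$, while $G^1 = \left(\begin{smallmatrix} u \\ G' \end{smallmatrix}\right)$ generates $C^1$; the hypothesis $d \ge 2$ forces $u$ to lie outside the row span of $G'$ (else $C$ would contain the weight-one codeword $e_1$), so $G^1$ has rank $k$.

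A direct block computation shows that $G\overline{G}^T$ and $G^1\overline{G^1}^T$ agree outside the $(1,1)$-entry, where they take the values $1 + u\overline{u}^T$ and $u\overline{u}^T$ respectively, while their common trailing diagonal block is $G'\overline{G'}^T$. Multilinearity of the determinant in the first row, combined with cofactor expansion, therefore yields the identity
\[
\det(G\overline{G}^T) = \det(G^1\overline{G^1}^T) + \det(G'\overline{G'}^T).
\]
By \cref{thm:guneri2016}, this identity simultaneously encodes the LCD-status of $C$, $C^1$, and $C_1$.

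The crucial final step, and the main obstacle, is to upgrade the naive consequence \emph{at least one of $C^1, C_1$ is LCD} to the sharper \emph{exactly one}. For this I use that each of the three Gram matrices above is Hermitian, so by the standard identities $\det(\overline{M}) = \overline{\det(M)}$ and $\det(M^T) = \det(M)$ its determinant lies in the fixed field $\FF_2 = \{0,1\}$ of the conjugation on $\FF_4$. Since $C$ is Hermitian LCD, \cref{thm:guneri2016} gives $\det(G\overline{G}^T) = 1 \in \FF_2$, and the displayed identity then forces exactly one of the two right-hand summands to equal $1$. Applying \cref{thm:guneri2016} once more, exactly one of $C^1$ and $C_1$ is Hermitian LCD, as claimed.
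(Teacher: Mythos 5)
Your proof is correct, and it takes a genuinely different route from the paper's. The paper argues directly from the decomposition $\FF_4^n = C \oplus C^{\perp h}$: it writes $(1,0,\dots,0)$ uniquely as a sum of a codeword of $C$ and one of $C^{\perp h}$, uses the Hermitian inner product to rule out the case where the first coordinates are $\omega$ and $\omega^2$, and then runs a case analysis producing an explicit nonzero hull vector for whichever of $C_1$, $C^1$ fails to be LCD while showing the other has trivial hull. Your argument instead reduces everything to \cref{thm:guneri2016} via the block identity
\begin{equation*}
    \det(G\overline{G}^T) = \det(G^1\overline{G^1}^T) + \det(G'\overline{G'}^T),
\end{equation*}
together with the observation that a Hermitian Gram matrix $M$ satisfies $\overline{M}=M^T$, so its determinant lies in the fixed field $\FF_2=\{0,1\}$, and $1=a+b$ with $a,b\in\FF_2$ forces exactly one summand to be $1$. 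I checked the two places where your hypotheses do real work and both are in order: $d^{\perp h}\ge 2$ guarantees the first column of $G$ is nonzero, so the normal form $\begin{pmatrix} 1 & u \\ \0 & G' \end{pmatrix}$ exists, and $d\ge 2$ guarantees $u$ is outside the row span of $G'$, so $G^1$ really is a rank-$k$ generator matrix of $C^1$ and \cref{thm:guneri2016} applies to it. Your route is shorter and makes transparent exactly where the quaternary assumption enters: over a general $\FF_{q^2}$ the same identity only yields that \emph{at least} one of $C^i$, $C_i$ is Hermitian LCD, since the fixed field $\FF_q$ then has more than two elements. The paper's route is more elementary (no determinants) and hands you the offending hull vector explicitly, but your determinant identity meshes more naturally with \cref{thm:guenda2018,thm:shortenLCD}, which already phrase the hull dimension in terms of $\rank(G\overline{G}^T)$.
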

\begin{proof}
    Without loss of generality, we may assume that $i=1$.
    Since $C$ is a Hermitian LCD code, any vector in $\FF_4^n$ can be written uniquely as the sum of a vector in $C$ and a vector in $C^{\perp h}$.
    Especially, one of the following holds:
    \begin{enumerate}
        \item
              $(1,0,\dots,0)=(\omega,v)+(\omega^2,v)$, where $(\omega,v) \in C$ or $(\omega,v) \in C^{\perp h}$.
        \item
              $(1,0,\dots,0)=(0,v)+(1,v)$, where $(0,v) \in C$ or $(0,v) \in C^{\perp h}$.
    \end{enumerate}
    Suppose that $(1,0,\dots,0)=(\omega,v)+(\omega^2,v)$.
    If $(\omega,v) \in C$, then $(\omega^2,v) \in C^{\perp h}$.
    Hence $((\omega,v), (\omega^2,v))_h=0$.
    However, this is impossible because $((\omega,v),(\omega^2,v))_h=\omega^2+(v,v)_h \ne 0$.
    By the same argument, it is verified that $(\omega,v) \in C^{\perp h}$ is not possible.
    Therefore, we only consider (ii).

    First, consider the case where $(0,v) \in C$.
    It follows that $v \in C_1$ and $v \in (C^{\perp h})^1$.
    By \cref{thm:PperpSperp}, $v \in C_1 \cap (C_1)^{\perp h}$.
    Therefore, $C_1$ is not a Hermitian LCD code.
    Note that $v$ is a nonzero vector since $d,d^{\perp h} \ge 2$.
    Suppose that $C^1$ is not a Hermitian LCD code either.
    Then there exists a nonzero vector $u \in C^1 \cap (C^1)^{\perp h}$.
    From $u \in (C^1)^{\perp h} = (C^{\perp h})_1$, it follows that $(0,u) \in C^{\perp h}$.
    From $u \in C^1$, one of the following holds: $(0,u), (1,u), (\omega,u), (\omega^2,u) \in C$.
    \begin{enumerate}
        \item
              Let $(0,u) \in C$.
              Then it holds that $(0,u) \in C \cap C^{\perp h}$.
              This is a contradiction as $C$ is a Hermitian LCD code.
        \item
              Let $(1,u) \in C$.
              It holds that $(1,0) = (0,v) + (1,v) = (0,u) + (1,u)$.
              Note that $v \ne u$ since $(0,v),(1,u) \in C$ and $d \ge 2$.
              This is a contradiction as $C$ is a Hermitian LCD code.
        \item
              Let $(\omega,u) \in C$.
              Then it follows that $(1, \omega^2 u) \in C$.
              Therefore, $(1,0) = (0,v) + (1,v) = (0, \omega^2 u) + (1, \omega^2 u)$.
              Note that $v \ne \omega^2 u$ since $(0,v),(\omega,u) \in C$ and $d \ge 2$.
              This is a contradiction as $C$ is a Hermitian LCD code.
        \item
              Then it follows that $(1, \omega u) \in C$.
              Therefore, $(1,0) = (0,v) + (1,v) = (0, \omega u) + (1, \omega u)$.
              Note that $v \ne \omega u$ since $(0,v),(\omega^2,u) \in C$ and $d \ge 2$.
              This is a contradiction as $C$ is a Hermitian LCD code.
    \end{enumerate}
    Therefore, it holds that $C^1$ is a Hermitian LCD code.

    Consider the case where $(1,v) \in C$.
    It follows that $v \in C^1$ and $v \in (C^{\perp h})_1$.
    By the same argument as above, it holds that $C^1$ is not a Hermitian LCD code and $C_1$ is a Hermitian LCD code.
\end{proof}

\Cref{thm:orthnormal_basis} is obtained by Ken Saito (private communication, Dec. 31, 2021).
In~\cite{carlet_new_2019}, a similar result was given for binary Euclidean LCD codes.
\begin{lem}[Ken Saito] \label{thm:orthnormal_basis}
    Let $C$ be a quaternary Hermitian LCD $[n,k]$ code.
    Then there exists a generator matrix $G$ of $C$ such that $G\overline{G}^T=I_k$.
\end{lem}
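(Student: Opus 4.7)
The plan is to prove the statement by induction on $k$, showing more abstractly that any non-degenerate Hermitian form on a finite-dimensional vector space over $\FF_4$ admits an orthonormal basis. Since $C$ is a Hermitian LCD code, the restriction of $(\cdot,\cdot)_h$ to $C$ is non-degenerate (by \cref{thm:guenda2018}, the form represented on any generator matrix has full rank $k$). Given an orthonormal basis $v_1,\dots,v_k$ of $C$ with respect to $(\cdot,\cdot)_h$, the matrix $G$ whose rows are $v_1,\dots,v_k$ is a generator matrix of $C$ satisfying $G\overline{G}^T = I_k$.

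The inductive step is the standard Gram--Schmidt decomposition: given a vector $v \in C$ with $(v,v)_h \ne 0$, the orthogonal complement of $v$ inside $C$ has dimension $k-1$, and the restriction of the form to this subspace is again non-degenerate (any vector in its radical would lie in the radical of $C$). A useful simplification specific to $\FF_4$ is that $(v,v)_h$ always lies in $\FF_2$, since it is fixed by conjugation $a \mapsto a^2$; concretely, $(v,v)_h = \sum_i v_i \overline{v_i} = \sum_i v_i^3$, and $v_i^3 \in \{0,1\}$ for every $v_i \in \FF_4$. Consequently, whenever a non-isotropic vector exists we automatically have $(v,v)_h = 1$, so no scaling is required. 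The base case $k=1$ reduces to exhibiting a single non-isotropic vector, which any nonzero $v \in C$ supplies, since $(v,v)_h = 0$ would put $v$ in $C \cap C^{\perp h}$.

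The main obstacle, then, is showing that every non-degenerate Hermitian subspace of $\FF_4^n$ of dimension at least one contains a non-isotropic vector. I would argue this by contradiction: suppose $(u,u)_h = 0$ for every $u \in C$. Expanding
\begin{equation*}
0 = (u+v,u+v)_h = (u,u)_h + (u,v)_h + (v,u)_h + (v,v)_h = (u,v)_h + \overline{(u,v)_h}
\end{equation*}
for arbitrary $u,v \in C$ forces $(u,v)_h \in \FF_2$ for all $u,v \in C$. Replacing $u$ by $\omega u$ then gives $\omega (u,v)_h \in \FF_2$, and since $\omega \notin \FF_2$, this forces $(u,v)_h = 0$ identically, contradicting the non-degeneracy of the form on $C$. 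Together with the inductive construction above, this completes the proof.
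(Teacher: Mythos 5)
Your proof is correct, and it follows the same overall strategy as the paper --- an inductive Gram--Schmidt process that splits off one non-isotropic vector at a time --- but you justify the two key sub-steps differently. Where the paper establishes the existence of a non-isotropic vector by observing that an even quaternary code is Hermitian self-orthogonal (citing \cref{thm:macwilliams1978}), so an LCD code must contain an odd-weight codeword $x_1$ with $(x_1,x_1)_h\ne 0$, you prove the same fact from scratch: polarization gives $(u,v)_h+\overline{(u,v)_h}=0$, hence $(u,v)_h\in\FF_2$, and rescaling by $\omega$ forces the form to vanish identically, contradicting non-degeneracy. (Your remark that $(v,v)_h=\sum_i v_i^3=\wt(v)\bmod 2$ is exactly the bridge between the two formulations: ``all vectors isotropic'' is the same as ``$C$ is even,'' and your polarization argument is in effect a proof of the direction of MacWilliams' theorem that the paper invokes.) Likewise, for the splitting-off step the paper cites the proof of Harada's Proposition 4 to produce the generator matrix $\begin{pmatrix} x_1 \\ G_1\end{pmatrix}$ with $x_1\overline{G_1}^T=\0_{k-1}$ and $G_1$ generating a Hermitian LCD $[n,k-1]$ code, whereas you argue directly that the orthogonal complement of a non-isotropic vector inside $C$ is $(k-1)$-dimensional with non-degenerate restricted form --- the same statement, since by \cref{thm:guenda2018} a code is Hermitian LCD exactly when the restricted form is non-degenerate. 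The net effect is that your version is self-contained and phrased for an arbitrary non-degenerate Hermitian space over $\FF_4$ (with the pleasant bonus that norms land in $\FF_2$, so no scaling is ever needed), while the paper's version is shorter by leaning on two external results. Both are complete proofs.
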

\begin{proof}
    Suppose that $C$ is an even code.
    Then, by Lemma~\ref{thm:macwilliams1978}, $C$ is a Hermitian self-dual code.
    This is a contradiction as $C$ is a Hermitian LCD code.
    Thus, we assume that $C$ is an odd code.
    Then there exists a codeword which has an odd weight.
    From the proof of~\cite[Proposition 4]{harada_remark_2020}, $C$ has a generator matrix of the form
    \begin{math}
        G_0 =
        \begin{pmatrix}
            x_1 \\
            G_1 \\
        \end{pmatrix}
    \end{math},
    satisfying that $(x_1,x_1)_h \ne 0$ and $x_1 \overline{G_1}^T = \0_{k-1}$.
    The code $C_1$ with generator matrix $G_1$ is a Hermitian LCD $[n,k-1]$ code.
    After finitely many steps, we obtain a generator matrix $G$ of $C$:
    \begin{equation*}
        G =
        \begin{pmatrix}
            x_1    \\
            x_2    \\
            \vdots \\
            x_k    \\
        \end{pmatrix},
    \end{equation*}
    where $(x_i,x_j)_h=\delta_{i,j}$ for all $1 \le i,j \le n$.
    Here $\delta_{i,j}$ is the Kronecker delta function.
    Hence $G \overline{G}^T = I_k$.
    This completes the proof.
\end{proof}

\begin{thm} \label{thm:shortenLCD}
    Let $C$ be a quaternary Hermitian LCD $[n,k]$ code and let $G$ be a generator matrix of $C$ such that $G\overline{G}^T=I_k$.
    Let $l_i$ denote the $i$-th column of $G$.
    Then $C^i$ is a Hermitian LCD code if and only if $\wt(l_i)$ is even.
\end{thm}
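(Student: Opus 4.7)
The plan is to characterize when $C^i$ is Hermitian LCD by computing the Gram matrix of the punctured generator matrix $G^i$ (the matrix obtained from $G$ by deleting its $i$-th column) and then invoking \cref{thm:guneri2016}. The key identity is that
\begin{equation*}
G^i \overline{G^i}^T = G \overline{G}^T - l_i \overline{l_i}^T = I_k - l_i \overline{l_i}^T,
\end{equation*}
which follows because $G \overline{G}^T$ decomposes as the sum of rank-one outer products $\sum_{j=1}^{n} l_j \overline{l_j}^T$ over the columns of $G$, and puncturing at position $i$ drops exactly the summand $l_i \overline{l_i}^T$.

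Next I would apply the matrix determinant lemma to this rank-one update:
\begin{equation*}
\det(I_k - l_i \overline{l_i}^T) = 1 - \overline{l_i}^T l_i = 1 - (l_i, l_i)_h.
\end{equation*}
The scalar $(l_i,l_i)_h = \sum_j l_i[j]\,\overline{l_i[j]}$ collapses using $a\overline{a} = a^3 \in \{0,1\}$ (equal to $1$ iff $a \neq 0$) together with the fact that $\FF_4$ has characteristic two: the sum becomes the parity of $\wt(l_i)$. Hence the Gram determinant equals $1$ when $\wt(l_i)$ is even and $0$ when $\wt(l_i)$ is odd.

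The conclusion then comes from \cref{thm:guneri2016}. If $\wt(l_i)$ is even, the nonsingularity of $G^i \overline{G^i}^T$ automatically forces $G^i$ to have full row rank $k$, so $G^i$ is a generator matrix of $C^i$ as an $[n-1,k]$ code and $C^i$ is Hermitian LCD. If $\wt(l_i)$ is odd, the singularity of $G^i \overline{G^i}^T$ together with $G^i$ being a generator matrix of $C^i$ gives, via \cref{thm:guneri2016}, that $C^i$ is not Hermitian LCD. The main point of care is the characteristic-two reduction that turns $(l_i,l_i)_h$ into the parity of $\wt(l_i)$; the matrix side is just the standard rank-one update identity, and the rank bookkeeping that links $\rank(G^i)$ to $\dim(C^i)$ can be handled cleanly by combining the nonsingularity check with \cref{thm:guenda2018}.
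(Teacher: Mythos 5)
Your argument is correct and arrives at exactly the same determinant value as the paper, but by a genuinely different and much shorter route. The paper's proof sorts the entries of $l_i$, writes $G^i\overline{G^i}^T$ out as an explicit block matrix with blocks $J-I_p$, $\omega J$, $\omega^2 J$ (where $p,q,r$ count the entries of $l_i$ equal to $1,\omega,\omega^2$), and evaluates its determinant through a four-way case analysis on the parities of $p,q,r$ with bespoke row operations in each case. You instead observe that $G\overline{G}^T=\sum_{j}l_j\overline{l_j}^T$, so deleting the $i$-th column is the rank-one update $G^i\overline{G^i}^T=I_k-l_i\overline{l_i}^T$, and the matrix determinant lemma (a polynomial identity, so valid over $\FF_4$) gives $\det(G^i\overline{G^i}^T)=1-(l_i,l_i)_h$ in one line; the reduction of $(l_i,l_i)_h$ to the parity of $\wt(l_i)$ via $a\overline{a}=a^3\in\{0,1\}$ and characteristic two then replaces the entire case analysis. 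Your version also isolates exactly where $\FF_4$ enters: the rank-one-update step works verbatim over any $\FF_{q^2}$, with the criterion becoming $(l_i,l_i)_h\neq 1$, whereas the paper's block computation is tied to the four-element field. On the one point of care, you are in fact slightly more careful than the paper: in the even case you note that nonsingularity of the Gram matrix forces $\rank(G^i)=k$, so $G^i$ genuinely is a generator matrix of $C^i$ before \cref{thm:guneri2016} is applied. In the odd direction you, like the paper, take for granted that $G^i$ generates $C^i$; this fails only when $C$ has a weight-one codeword supported at coordinate $i$ (a degenerate situation, e.g.\ $C=\FF_4^2$ with $G=I_2$, in which the stated equivalence itself breaks down), so this is an implicit hypothesis inherited from the theorem rather than a defect introduced by your argument.
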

\begin{proof}
    Without loss of generality, we may assume that
    \[
        l_i = (0,\dots,0,1,\dots,1,\omega,\dots,\omega,\omega^2,\dots,\omega^2)^T.
    \]
    Let $p,q$ and $r$ be the numbers of coordinates whose entries equal to $1,\omega$ and $\omega^2$, respectively.
    Let $G^i$ be a generator matrix of $C^i$ obtained by deleting $i$-th column of $G$.
    Since $G\overline{G}^T = I_k$, it holds that
    \[
        G^i \overline{G^i} = \begin{pmatrix}
            I_{n-p-q-r} & O          & O          & O          \\
            O           & J-I_p      & \omega^2 J & \omega J   \\
            O           & \omega J   & J-I_q      & \omega^2 J \\
            O           & \omega^2 J & \omega J   & J-I_r      \\
        \end{pmatrix},
    \]
    where $J$ and $O$ denote the all-ones matrices and the zero matrix of suitable sizes, respectively.
    Let $M$ be
    \[
        M = \begin{pmatrix}
            J-I_p      & \omega^2 J & \omega J   \\
            \omega J   & J-I_q      & \omega^2 J \\
            \omega^2 J & \omega J   & J-I_r      \\
        \end{pmatrix}.
    \]
    Then $M$ is a submatrix of $G^i \overline{G^i}$ and $\det{G^i \overline{G^i}} = \det{M}$.
    \begin{enumerate}
        \item
              Suppose $p \equiv q \equiv r \equiv 0 \pmod 2$.
              The value of $\det{M}$ is calculated as follows:
              Let $M_1$ denote the matrix whose rows consist of the $i$-th rows of $M$ for $1 \le i \le p$.
              First, add the $i$-th rows of $M_1$ to the first row for $i=2,\dots,p$.
              Adding the first row to $i$-th rows for $i=2,\dots,p$, we obtain $M_1 = \begin{pmatrix} A & B & C \end{pmatrix}$,
              where
              \begin{align*}
                   & A = \begin{pmatrix}
                      1      & 1      & \dots  & 1      \\
                      0      & 1      & \dots  & 0      \\
                      \vdots & \vdots & \ddots & \vdots \\
                      0      & 0      & \dots  & 1
                  \end{pmatrix}, \\
                   & B = \begin{pmatrix}
                      0        & \dots  & 0        \\
                      \omega^2 & \dots  & \omega^2 \\
                               & \vdots &          \\
                      \omega^2 & \dots  & \omega^2 \\
                  \end{pmatrix}, \\
                   & C = \begin{pmatrix}
                      0      & \dots  & 0      \\
                      \omega & \dots  & \omega \\
                             & \vdots &        \\
                      \omega & \dots  & \omega \\
                  \end{pmatrix}.
              \end{align*}
              By the same row operations, we obtain
              \[
                  \det M = \det \begin{pmatrix}
                      A & B & C \\
                      C & A & B \\
                      B & C & A
                  \end{pmatrix}.
              \]
              Add the first row of $\det M$ multiplied by $\omega$ and $\omega^2$ to $i$-th rows for $i=p+2,\dots,p+q$ and $p+q+2,\dots,p+q+r$, respectively.
              Then add the $(p+1)$-st row multiplied by $\omega^2$ and $\omega$ to $i$-th rows for $i=2,\dots,p$ and $p+q+2,\dots,p+q+r$.
              Finally, adding the $(p+q+1)$-st row multiplied by $\omega$ and $\omega^2$ to $i$-th rows for $i=2,\dots,p$ and $p+2,\dots,p+q$, respectively, we obtain
              \[
                  \det M = \det \begin{pmatrix}
                      A & O & O \\
                      O & A & O \\
                      O & O & A
                  \end{pmatrix}.
              \]
              Therefore, it follows that $\det{M}=1$.
        \item
              Suppose that $p \equiv q \equiv r \equiv 1 \pmod 2$.
              The value of $\det{M}$ is calculated as follows:
              By the same row operations as in (i), we obtain
              \[
                  \det M = \det \begin{pmatrix}
                      J-A          & \omega^2 J-B & \omega J-C   \\
                      \omega J-C   & J-A          & \omega^2 J-B \\
                      \omega^2 J-B & \omega J-C   & J-A
                  \end{pmatrix}.
              \]
              Add the first row of $\det M$ to the $(p+q+1)$-st row.
              Then add the $(p+1)$-st row of $\det M$ to the first row.
              Multiplying the first row by $\omega$, we verify that the first row and $(p+q+1)$-st row is identical.
              Therefore, it follows that $\det M = 0$.
        \item
              Suppose that $p \equiv 0 \pmod 2$, $q \equiv r \equiv 1 \pmod 2$.
              The value of $\det{M}$ is calculated as follows:
              By the same row operations as in (i), we obtain
              \[
                  \det M = \det \begin{pmatrix}
                      A            & B          & C            \\
                      \omega J-C   & J-A        & \omega^2 J-B \\
                      \omega^2 J-B & \omega J-C & J-A
                  \end{pmatrix}.
              \]
              Add the first row of $\det M$ multiplied by $\omega$ to the $(p+1)$-st row.
              Furthermore, add the first row multiplied by $\omega^2$ to the $(p+q+1)$-st row.
              Transposing the $(p+1)$-st row with the $(p+q+1)$-st row, we obtain
              \[
                  \det M = \det \begin{pmatrix}
                      A & B & C \\
                      O & D & O \\
                      O & O & E
                  \end{pmatrix},
              \]
              where
              \begin{align*}
                   & D = \begin{pmatrix}
                      \omega & \omega & \dots  & \omega \\
                      0      & 1      & \dots  & 0      \\
                      \vdots & \vdots & \ddots & \vdots \\
                      0      & 0      & \dots  & 1
                  \end{pmatrix}, \\
                   & E = \begin{pmatrix}
                      \omega^2 & \omega^2 & \dots  & \omega^2 \\
                      0        & 1        & \dots  & 0        \\
                      \vdots   & \vdots   & \ddots & \vdots   \\
                      0        & 0        & \dots  & 1
                  \end{pmatrix}.
              \end{align*}
              Therefore, it follows that $\det M = 1$.
              By the same argument, it holds that if only one of $p,q,r$ is even, then $\det M = 1$.
        \item
              Suppose that $p \equiv 1 \pmod 2$, $q \equiv r \equiv 0 \pmod 2$.
              The value of $\det{M}$ is calculated as follows:
              By the same row operations as in (i), we obtain
              \[
                  \det M = \det \begin{pmatrix}
                      J-A & \omega^2 J-B & \omega J-C \\
                      C   & A            & B          \\
                      B   & C            & A
                  \end{pmatrix}.
              \]
              Add the $(p+1)$-st row of $\det M$ multiplied by $\omega^2$ to the first row.
              Furthermore, add the $(p+q+1)$-st row multiplied by $\omega$ to the first row.
              Consequently, the first row equals to the zero vector.
              Therefore, it follows that $\det M = 0$.
              By the same argument, it holds that if only one of $p,q,r$ is odd, then $\det M = 0$.
    \end{enumerate}
    Therefore, it holds that $\det M=1$ if and only if $\wt(c_i)=p+q+r \equiv 0 \pmod 2$.
    This completes the proof.
\end{proof}

\begin{cor} \label{thm:punctureLCD}
    Let $C$ be a quaternary Hermitian LCD $[n,k]$ code and let $G$ be a generator matrix of $C$ such that $G\overline{G}^T=I_k$.
    Let $l_i$ denote the $i$-st column of $G$.
    Then $C_i$ is a Hermitian LCD code if and only if $\wt(l_i)$ is odd.
\end{cor}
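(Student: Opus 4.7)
The plan is to derive this corollary by directly combining the two results just established, without re-entering the intricate determinant computation of Theorem~\ref{thm:shortenLCD}. Specifically, Theorem~\ref{thm:shortenLCD} provides the exact parity criterion for when the punctured code $C^{i}$ is Hermitian LCD, and Theorem~\ref{thm:P_XOR_S} provides a dichotomy between $C^{i}$ and $C_{i}$, so the corollary follows by simple logical combination.

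First, I would invoke Theorem~\ref{thm:shortenLCD} to convert the condition on the $i$-th column of $G$ into a statement about $C^{i}$: the code $C^{i}$ is Hermitian LCD if and only if $\wt(l_i)$ is even. Next, I would apply Theorem~\ref{thm:P_XOR_S} to the code $C$ to conclude that exactly one of $C^{i}$ and $C_{i}$ is a Hermitian LCD code. Chaining the two equivalences then yields that $C_{i}$ is Hermitian LCD precisely when $C^{i}$ is not, i.e.\ precisely when $\wt(l_i)$ is odd, which is the desired statement.

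The main point requiring care is the hypothesis of Theorem~\ref{thm:P_XOR_S}, namely $d,d^{\perp h} \ge 2$. Under this standing assumption the reduction is immediate; the degenerate cases $d=1$ or $d^{\perp h}=1$ (where, for instance, a zero column of $G$ would make $C_i$ and $C^i$ coincide) would have to be addressed separately, but they do not arise in the constructions of Section~\ref{sec:optimal} where the corollary is applied. So rather than reproving the column-based determinant identity from scratch, the proof is essentially a one-line combination of the punctured-code criterion with the punctured/shortened dichotomy, and no new computation is required.
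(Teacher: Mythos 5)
Your proposal is correct and is exactly the paper's own proof: the author's entire argument for this corollary is ``By Theorems~\ref{thm:P_XOR_S} and~\ref{thm:shortenLCD}, the result follows.'' Your additional remark about the hypothesis $d,d^{\perp h}\ge 2$ of Theorem~\ref{thm:P_XOR_S} is a fair point of care that the paper itself silently glosses over.
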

\begin{proof}
    By \cref{thm:P_XOR_S,thm:shortenLCD}, the result follows.
\end{proof}

\section{Optimal quaternary Hermitian LCD codes} \label{sec:optimal}
Let $d_4(n,k)$ denote the largest minimum weight of quaternary Hermitian LCD $[n,k]$ codes.
As stated in \cref{sec:introduction}, we consider $d_4(n,k)$ for $12 \le n \le 30$ and $4 \le k \le n-4$.
In this section, we improve some of the previously known lower bounds on $d_4(n,k)$ for $12 \le n \le 30$ and $4 \le k \le n-4$, using the results of~\cref{sec:construction,sec:puncture}.
The punctured codes and shortened codes in this section were constructed by the {\sc Magma} functions {\tt ShortenCode} and {\tt PunctureCode}, respectively.

In order to improve lower bounds, we use the following method:
Let $d_K(n,k)$ denote the largest minimum weight among currently known $[n,k]$ codes.
By the {\sc Magma} function {\tt BestKnownLinearCode}, one can construct an $[n,k,d_K(n,k)]$ code
for all $12 \le n \le 30$ and $4 \le k \le n-4$.
In addition, considering shortened codes and punctured codes of $[n,k,d_K(n,k)]$ codes, we found quaternary Hermitian LCD $[n,k,d_K(n,k)]$ codes for
\begin{equation}\label{eq:LCD2}
    \begin{split}
        (n,k,d_K(n,k))=&(12,8,4),(13,9,4),(25,20,4),(26,5,16),(26,18,6),\\
        & (26,21,4),(27,5,17),(27,22,4),(28,6,17),(28,23,4),\\
        & (29,24,4),(30,24,4),(30,25,4).
    \end{split}
\end{equation}
In order to reduce the above computation, we used the following method:
Suppose that the punctured (resp. shortened) code of a quaternary Hermitian LCD code $C$ with respect to a coordinate, say $i$, is not a quaternary Hermitian LCD code.
Then, by~\cref{thm:P_XOR_S}, the shortened (resp. punctured) code of $C$ with respect to the coordinate $i$ is a quaternary Hermitian LCD code.
Thus, if an $[n,k,d_K(n,k)]$ code is a quaternary Hermitian LCD code, then we only need to check, for each coordinate, whether the punctured code of $C$ is a quaternary Hermitian LCD code or not.
Consequently, we obtain Proposition~\ref{prp:LCD:trivial}.
\begin{prp} \label{prp:LCD:trivial}
    There exists an optimal quaternary Hermitian LCD $[n,k,d]$ code for $(n,k,d)$ listed in \eqref{eq:LCD2}.
\end{prp}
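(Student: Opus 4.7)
The plan is to verify the proposition essentially by explicit construction, which makes it a computer-assisted existence statement rather than a structural theorem. For each of the thirteen triples $(n,k,d)$, I would proceed as follows. First, use the Magma call \texttt{BestKnownLinearCode(n,k)} to obtain a quaternary linear $[n,k,d_K(n,k)]$ code $C_0$. Note that $d_4(n,k)\le d_K(n,k)$ holds trivially because every Hermitian LCD code is in particular a linear code, so it suffices to exhibit, for each triple in \eqref{eq:LCD2}, a quaternary Hermitian LCD $[n,k,d_K(n,k)]$ code; optimality then follows automatically.

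To produce such an LCD code, I would test candidates obtained by single-coordinate puncturing and shortening of $C_0$. Checking the Hermitian LCD property is just a nonsingularity test for $G\overline{G}^T$ by \cref{thm:guneri2016}, which is cheap in Magma. The key labor-saving observation is \cref{thm:P_XOR_S}: whenever $C$ is Hermitian LCD with $d,d^{\perp h}\ge 2$, exactly one of $C^i,C_i$ is Hermitian LCD, so after confirming that $C_0$ itself is Hermitian LCD (or constructing such a $C_0$ first, possibly by iterated puncturing/shortening of a larger code), I only need to loop over the $n$ coordinates and test $C_0^{\,i}$; the complementary shortened codes are Hermitian LCD automatically and need no separate check. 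This cuts the search space in half and is the concrete methodological payoff of \cref{sec:puncture} here.

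Concretely, the algorithm is: for each $(n,k,d)$ in \eqref{eq:LCD2}, start with an explicit $[n+s,k+t,d]$ code produced by \texttt{BestKnownLinearCode} for appropriate small $s,t\ge 0$, and iteratively apply the Magma functions \texttt{PunctureCode} and \texttt{ShortenCode} coordinate-by-coordinate; at each stage verify the LCD property via \cref{thm:guneri2016}. Since the parameters in \eqref{eq:LCD2} have $n\le 30$, the search terminates quickly. The output is, for each triple, an explicit generator matrix $G$ together with the verification that $G\overline{G}^T$ is invertible over $\FF_4$ and that the minimum distance equals $d_K(n,k)$.

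The main obstacle is not mathematical but computational/bookkeeping: one must actually carry out the searches for all thirteen parameter sets, and in some cases (where a direct $[n,k,d_K(n,k)]$ code returned by Magma fails to be Hermitian LCD on the nose) iterate through several rounds of puncturing and shortening before a Hermitian LCD representative is found. \cref{thm:P_XOR_S} guarantees that at every step at least one of the two candidate subcodes remains Hermitian LCD, so the search cannot get stuck, but no clean a priori bound on the number of rounds needed is offered; this is why the argument is in practice a verified computation rather than a closed-form proof.
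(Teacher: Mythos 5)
Your construction strategy is essentially the one the paper uses: pull an $[n,k,d_K(n,k)]$ code out of \texttt{BestKnownLinearCode}, produce candidates by puncturing and shortening, test the Hermitian LCD property via the nonsingularity of $G\overline{G}^T$ (\cref{thm:guneri2016}), and use \cref{thm:P_XOR_S} to halve the per-coordinate checks. That part is fine.

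There is, however, a genuine gap in your justification of \emph{optimality}. You assert that $d_4(n,k)\le d_K(n,k)$ ``holds trivially because every Hermitian LCD code is in particular a linear code.'' But $d_K(n,k)$ is defined as the largest minimum weight among \emph{currently known} $[n,k]$ codes, not among all $[n,k]$ codes; the trivial inequality is only $d_4(n,k)\le d(n,k)$, where $d(n,k)$ is the true optimum over all linear codes. A priori there could exist an as-yet-unknown $[n,k]$ code --- possibly Hermitian LCD --- with minimum weight exceeding $d_K(n,k)$, in which case exhibiting an LCD code meeting $d_K(n,k)$ would not establish optimality. The paper closes this gap by invoking the upper bounds $d_4(n,k)\le d(n,k)$ from the tables of best known linear codes: for each triple in \eqref{eq:LCD2} the relevant value of $d(n,k)$ is known exactly and coincides with $d_K(n,k)$, so the constructed codes are optimal. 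Your argument needs this extra verification (that $d_K(n,k)=d(n,k)$ for the thirteen parameter sets) to conclude. A second, smaller imprecision: your claim that \cref{thm:P_XOR_S} ensures ``the search cannot get stuck'' only applies once you already hold a Hermitian LCD code with $d,d^{\perp h}\ge 2$; it says nothing about preserving the target minimum distance, so the search is genuinely heuristic, as the paper's (terse) account also implies.
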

By a method similar to that given in the above, we found quaternary Hermitian LCD $[n,k,d_K(n,k)-1]$ codes and quaternary Hermitian LCD $[n,k,d_K(n,k)-2]$ codes for
\begin{equation}\label{eq:LCD3}
    \begin{split}
        (n,k,d_K(n,k)-1)=&(14,10,3),(15,11,3),(16,12,3),(17,13,3),(26,4,17),\\
        &(26,17,6),(26,19,5),(26,20,4),(27,18,6),(27,19,5),\\
        &(27,20,5),(27,21,4),(28,5,17),(28,17,7),(28,19,6),\\
        &(28,20,5),(28,21,5),(28,22,4),(29,5,18),(29,6,17),\\
        &(29,18,7),(29,20,6),(29,21,5),(29,22,5),(29,23,4),\\
        &(30,5,19),(30,16,9),(30,19,7),(30,21,6),(30,22,5),\\
        &(30,23,5),
    \end{split}
\end{equation}
\begin{equation}\label{eq:LCD4}
    \begin{split}
        (n,k,d_K(n,k)-2)=&(26,6,14),(26,7,13),(26,8,12),(26,12,9),\\
        &(26,14,7),(26,15,7),(26,16,6),(27,4,17),(27,7,14),\\
        &(27,10,11),(27,13,9),(27,14,8),(27,15,7),(27,16,7),\\
        &(27,17,6),(28,7,14),(28,8,14),(28,10,12),(28,14,9),\\
        &(28,15,8),(28,16,7),(28,18,6),(29,4,18),(29,7,15),\\
        &(29,8,14),(29,9,14),(29,10,13),(29,15,9),(29,16,8),\\
        &(29,17,7),(29,19,6),(30,4,19),(30,6,17),(30,8,15),\\
        &(30,9,14),(30,10,14),(30,17,8),(30,18,7),(30,20,6).
    \end{split}
\end{equation}

Consequently, we obtain Proposition~\ref{prp:LCD:trivial1}.
\begin{prp} \label{prp:LCD:trivial1}
    There exists a quaternary Hermitian LCD $[n,k,d]$ code for $(n,k,d)$ listed in \eqref{eq:LCD3} and~\eqref{eq:LCD4}.
\end{prp}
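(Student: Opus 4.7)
The plan is to give an explicit construction for each triple, modeled on the argument used for Proposition~\ref{prp:LCD:trivial}, since the statement is a long list of existence claims rather than a structural result. For each triple $(n,k,d)$ in the two tables, I will produce a quaternary Hermitian LCD $[n,k]$ code with minimum weight at least $d$ by combining Magma's \texttt{BestKnownLinearCode} with the puncturing/shortening dichotomy of \cref{thm:P_XOR_S} and, when needed, the construction from \cref{sec:construction}.

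First I would fix, for each triple, a seed $[n_0,k_0,d_0]$ code $C_0$ from \texttt{BestKnownLinearCode} with $n_0 \ge n$, $k_0 \ge k$, and $d_0$ large enough that after $n_0-n$ puncturing/shortening operations the minimum weight can still reach $d$. Since the target here is only $d_K(n,k)-1$ or $d_K(n,k)-2$ rather than $d_K(n,k)$ itself, I have appreciably more slack than in Proposition~\ref{prp:LCD:trivial}; in many cases the seed can already be the $[n,k,d_K(n,k)]$ code, with just a single puncturing or shortening needed to turn it into a Hermitian LCD code of the required minimum weight.

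Next I would walk from $C_0$ down to an $[n,k]$ Hermitian LCD code by iterated shortening and puncturing, one coordinate at a time. As long as the current code $C$ has $d(C), d^{\perp h}(C) \ge 2$, \cref{thm:P_XOR_S} guarantees that for every coordinate $i$ exactly one of $C^i, C_i$ is Hermitian LCD, so I can always choose a direction that preserves the Hermitian LCD property. LCD status is certified via \cref{thm:guneri2016} (nonsingularity of $G\overline{G}^T$), and minimum weights are computed directly in Magma. If no such walk produces the required $d$, I would apply \cref{thm:construction_LCD} with varying isotropic pairs $(x,y)$ to the current generator matrix, producing many Hermitian LCD codes with the same parameters but different weight distributions, and search within this ensemble.

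The hard part will not be conceptual but bookkeeping and computational: I must verify, for each of the many triples in \eqref{eq:LCD3} and \eqref{eq:LCD4}, that the search actually terminates with a code meeting the minimum-weight target, and that the hypothesis $d, d^{\perp h} \ge 2$ of \cref{thm:P_XOR_S} holds throughout each walk. I expect the tightest cases to be those in \eqref{eq:LCD3} with $d = d_K(n,k)-1$, since the margin left after puncturing/shortening is smallest, and for these I anticipate relying most heavily on the \cref{sec:construction} construction to enlarge the pool of candidate Hermitian LCD codes.
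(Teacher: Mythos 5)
Your proposal is essentially the paper's own argument: the proposition is established purely computationally, by taking punctured and shortened codes of \texttt{BestKnownLinearCode} outputs, certifying the Hermitian LCD property via the nonsingularity of $G\overline{G}^T$, and exhibiting explicit generator matrices (posted online); the only cosmetic difference is that the paper reserves \cref{thm:construction_LCD} for the parameters in \eqref{eq:LCD1} handled by \cref{prp:LCD:PI}, not for \eqref{eq:LCD3} and \eqref{eq:LCD4}.

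One small logical slip worth fixing: \cref{thm:P_XOR_S} applies only when the current code $C$ is \emph{already} a Hermitian LCD code with $d,d^{\perp h}\ge 2$, whereas a seed from \texttt{BestKnownLinearCode} need not be Hermitian LCD; and even when it applies, $C^i$ and $C_i$ have different dimensions ($k$ versus $k-1$), so the ``exactly one of the two is LCD'' dichotomy does not let you ``always choose a direction that preserves the Hermitian LCD property'' while also steering toward a prescribed $(n,k)$. The paper uses the dichotomy only as a computational shortcut (when the starting code is LCD, it suffices to test the punctured codes), not as a guarantee that the walk succeeds; since the proposition is ultimately settled by exhibiting the codes, this does not invalidate your plan, but the search is not guaranteed to terminate successfully in the way your phrasing suggests.
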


Furthermore, applying Corollary~\ref{thm:construction_LCD} to quaternary Hermitian LCD $[n,k,d]$ codes for $(n,k,d)$ listed in~\cref{tab:xy}, we found quaternary Hermitian LCD $[n,k,d]$ codes for
\begin{equation}\label{eq:LCD1}
    \begin{split}
        (n,k,d)=&(27,6,15),(27,8,13),(27,9,12),(27,11,10),(28,11,11),\\
        &(29,11,11),(30,7,16),(30,12,11).
    \end{split}
\end{equation}
Therefore, we obtain Proposition~\ref{prp:LCD:PI}.
\begin{prp} \label{prp:LCD:PI}
    There exists a quaternary  LCD $[n,k,d]$ code for $(n,k,d)$ listed in \eqref{eq:LCD1}.
\end{prp}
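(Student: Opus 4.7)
The plan is to prove the proposition by explicit construction for each of the eight parameter tuples listed in~\eqref{eq:LCD1}. By hypothesis, Table~\ref{tab:xy} supplies, for each $(n,k,d)$, a starting quaternary Hermitian LCD $[n,k]$ code together with a pair of vectors $x,y \in \FF_4^{n-k}$. The proof then reduces to checking that this data, when fed through \cref{thm:construction_LCD}, produces a Hermitian LCD code whose minimum weight is exactly the claimed value $d$.

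First I would bring each starting code to the standard form $G = \begin{pmatrix} I_k & A \end{pmatrix}$ via Gaussian elimination (after a coordinate permutation if necessary). Next I would verify the three Hermitian isotropy conditions $(x,x)_h = (y,y)_h = (x,y)_h = 0$ on the tabulated vectors; these are finite calculations in $\FF_4$. Then I would construct the modified matrix $A(x,y)$, whose $i$-th row is $r_i + (r_i,y)_h x - (r_i,x)_h y$ with $r_i$ the $i$-th row of $A$, and let $C(A(x,y))$ be the code generated by $\begin{pmatrix} I_k & A(x,y) \end{pmatrix}$. By \cref{thm:construction_LCD}, $C(A(x,y))$ is automatically Hermitian LCD, so the only remaining issue is the minimum weight.

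The minimum-weight check is the principal step and is purely computational: for each parameter tuple one computes $\wt(C(A(x,y)))$ in {\sc Magma} and confirms it matches the entry of~\eqref{eq:LCD1}. Since the construction preserves $\dim(\Hull_h(C))=0$, no further algebraic verification of the LCD property is required beyond invoking \cref{thm:construction_LCD}.

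The main obstacle is not conceptual but computational: a generic isotropic pair $(x,y)$ satisfying the three inner-product conditions will typically yield a code with the same or smaller minimum weight as the starting code, so locating the pairs $(x,y)$ that genuinely improve the distance requires a search over the isotropic cone in $\FF_4^{n-k} \times \FF_4^{n-k}$. This search is precisely what the entries of Table~\ref{tab:xy} are recording; once those vectors are in hand, the proof of \cref{prp:LCD:PI} is the mechanical verification described above.
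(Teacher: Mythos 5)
Your proposal matches the paper's proof, which is exactly this computational recipe: start from the Hermitian LCD $[n,k,d-1]$ codes whose parameters head \cref{tab:xy} (generator matrices supplied electronically), apply \cref{thm:construction_LCD} with the tabulated isotropic pairs $x,y$, and verify by machine that the resulting codes have the minimum weights claimed in \eqref{eq:LCD1}. The paper offers no further argument beyond this, so your description of the verification steps is, if anything, more explicit than the original.
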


The vectors $x,y$ in Corollary~\ref{thm:construction_LCD} are listed in~\cref{tab:xy}.
For each of the parameters given in~\eqref{eq:LCD2} through~\eqref{eq:LCD1}, a quaternary Hermitian LCD code can be obtained electronically from \url{https://www.math.is.tohoku.ac.jp/~mharada/Ishizuka/F4_HLCD_genmat.txt}.
Note that, the generator matrix used to construct a quaternary Hermitian LCD code in~\cref{prp:LCD:PI} can be obtained from $x,y$ in~\cref{tab:xy} and the generator matrix given in~\url{https://www.math.is.tohoku.ac.jp/~mharada/Ishizuka/F4_HLCD_genmat.txt}.
\begin{landscape}
    \begin{table}[thbp]\caption{Vectors $x,y$ applied to quaternary Hermitian LCD $[n,k,d]$ codes}\label{tab:xy}
        \begin{center}
            \begin{tabular}{c|ccccccccccccccccccccccccccccccccccc}
                $(n,k,d)$    & $x$                          & $y$                          \\
                \hline
                $(27,6,14)$  & $\begin{smpmat}\omega&1&\omega^2&1&1&\omega&\omega&1&\omega&0&0&0&\omega^2&0&\omega&0&\omega&\omega&\omega&\omega^2&\omega\\ \end{smpmat}$ & $\begin{smpmat}\omega&\omega&\omega&\omega&\omega&\omega^2&0&\omega^2&1&\omega^2&0&0&0&0&\omega&1&\omega^2&\omega&\omega^2&\omega&\omega\\ \end{smpmat}$ \\
                $(27,8,12)$  & $\begin{smpmat}\omega^2&0&\omega&1&1&\omega^2&0&\omega^2&\omega^2&0&\omega&1&0&\omega^2&\omega^2&\omega&\omega^2&1&0\\ \end{smpmat}$ & $\begin{smpmat}\omega^2&\omega^2&1&\omega^2&\omega^2&\omega&\omega^2&1&1&\omega&1&0&0&1&\omega&0&0&\omega^2&0\\ \end{smpmat}$ \\
                $(27,9,11)$  & $\begin{smpmat}1&1&1&\omega&0&1&0&\omega&0&\omega&0&\omega^2&\omega&1&\omega&1&\omega^2&\omega\\ \end{smpmat}$ & $\begin{smpmat}0&0&\omega^2&1&\omega&\omega&\omega&0&1&0&\omega&1&1&\omega^2&1&\omega^2&0&0\\ \end{smpmat}$ \\
                $(27,11,9)$  & $\begin{smpmat}\omega^2&1&1&\omega^2&\omega&0&\omega^2&\omega&\omega^2&0&\omega&\omega^2&1&1&\omega&\omega^2\\ \end{smpmat}$ & $\begin{smpmat}0&0&0&1&1&\omega^2&1&\omega&1&0&1&\omega^2&1&0&1&0\\ \end{smpmat}$ \\
                $(28,11,10)$ & $\begin{smpmat}\omega&\omega^2&0&\omega&\omega&0&\omega&1&\omega^2&0&0&0&1&1&\omega^2&\omega&1\\ \end{smpmat}$ & $\begin{smpmat}0&\omega^2&\omega^2&1&0&0&1&\omega&\omega^2&1&\omega&1&\omega&0&0&\omega^2&\omega\\ \end{smpmat}$ \\
                $(29,11,10)$ & $\begin{smpmat}0&\omega&0&1&0&\omega&\omega&\omega^2&1&\omega&\omega^2&1&0&\omega^2&0&\omega^2&\omega^2&0\\ \end{smpmat}$ & $\begin{smpmat}\omega^2&\omega^2&\omega&\omega&\omega^2&\omega&0&1&\omega^2&0&\omega^2&0&0&1&1&\omega&0&0\\ \end{smpmat}$ \\
                $(30,7,15)$  & $\begin{smpmat}\omega^2&\omega&1&1&0&\omega^2&\omega^2&\omega&1&1&\omega^2&0&0&\omega^2&1&\omega^2&\omega^2&\omega^2&0&\omega&\omega&0&\omega^2\\ \end{smpmat}$ & $\begin{smpmat}1&\omega&1&0&0&1&\omega&1&1&\omega&1&0&1&0&\omega&0&\omega^2&\omega^2&1&\omega^2&\omega&0&0\\ \end{smpmat}$ \\
                $(30,12,10)$ & $\begin{smpmat}1&0&0&1&\omega&0&\omega&0&\omega^2&\omega^2&\omega^2&\omega^2&1&\omega&0&0&\omega&1\\ \end{smpmat}$ & $\begin{smpmat}1&\omega^2&0&0&\omega^2&\omega&0&\omega^2&0&\omega^2&\omega^2&0&\omega&0&1&\omega^2&\omega^2&\omega^2\\ \end{smpmat}$ \\
            \end{tabular}
        \end{center}
    \end{table}
\end{landscape}

As stated earlier, for $n \le 20$ and $4 \le k \le 20$, $d_4(n,k)$ is given with an upper bound in~\cite{harada_optimal_2019} and~\cite{lu_optimal_2020}.
For a quaternary Hermitian LCD $[n,k]$ code with $21 \le n \le 30$ and $4 \le k \le n-4$, we used the following to obtain upper bounds:
\begin{equation*}
    d_4(n,k) \le d(n,k),
\end{equation*}
where $d(n,k)$ denotes the largest minimum weight among all $[n,k]$ codes.
The values of $d(n,k)$ are given in~\cite{grassl_bounds_nodate}.

In~\cref{tab:F4-0,tab:F4-1}, we give $d_4(n,k)$ for $12 \le n \le 30$ and $4 \le k \le n-4$.
For the parameters listed in Proposition~\ref{prp:LCD:PI}, we mark $d_4(n,k)$ by $*$ in~\cref{tab:F4-0,tab:F4-1}.
Furthermore, for the parameters listed in Propositions~\ref{prp:LCD:trivial} through~\ref{prp:LCD:PI}, we give $d_4(n,k)$ in boldface.

\section*{Acknowledgments}
The author would like to thank supervisor Professor Masaaki Harada for his helpful advice and encouragement.
The author also would like to thank Ken Saito for his kind permission to publish his result in~\cref{thm:orthnormal_basis}.

\begin{table}[thbp]\caption{$d_{4}(n,k)$ for $4 \le k \le 12$}\label{tab:F4-0}\begin{center}{\small\begin{tabular}{c|cccccccccccccccccccccccccccccccccccccccccccccccc}\noalign{\hrule height 0.8pt}
                $n\backslash k$ & 4               & 5               & 6                & 7                & 8                & 9                & 10              & 11               & 12               \\
                12              & 7               & 6               & 5--6             & 5                & \textbf{4}       &                  &                 &                  &                  \\
                13              & 8               & 7               & 6                & 5                & 4                & \textbf{4}       &                 &                  &                  \\
                14              & 8               & 7--8            & 7                & 6                & 5                & 4                & \textbf{3}--4   &                  &                  \\
                15              & 9               & 8               & 7                & 7                & 6                & 5                & 4               & \textbf{3}--4    &                  \\
                16              & 10              & 9               & 8                & 7--8             & 6--7             & 6                & 5               & 5                & \textbf{3}--4    \\
                17              & 11              & 9--10           & 9                & 8                & 7--8             & 6--7             & 6               & 5--6             & 4                \\
                18              & 11--12          & 10--11          & 9--11            & 9                & 8--9             & 7--8             & 6--7            & 5--6             & 5                \\
                19              & 12--13          & 11              & 10--11           & 9                & 8--9             & 8                & 7               & 6--7             & 5--6             \\
                20              & 13              & 12              & 11--12           & 10               & 8--9             & 8--9             & 7--8            & 6--7             & 6--7             \\
                21              & 14              & 12              & 12               & 10--11           & 9--10            & 8--9             & 8--9            & 7--8             & 7--8             \\
                22              & 14              & 13              & 12--13           & 11--12           & 10               & 8--10            & 8--9            & 7--9             & 7--9             \\
                23              & 15              & 14              & 13               & 12--13           & 10--12           & 9--11            & 8--10           & 8--9             & 8--9             \\
                24              & 16              & 15              & 14               & 12--13           & 11--13           & 10--12           & 9--11           & 8--10            & 8--10            \\
                25              & 17              & 15              & 14--15           & 13--14           & 12--13           & 11--13           & 11--12          & 9--11            & 9--11            \\
                26              & \textbf{17}--18 & \textbf{16}     & \textbf{14}--16  & \textbf{13}--15  & \textbf{12}--14  & \textbf{10}--13  & \textbf{10}--13 & \textbf{9}--12   & \textbf{9}--11   \\
                27              & \textbf{17}--19 & \textbf{17}     & \textbf{15*}--16 & \textbf{14}--16  & \textbf{13*}--15 & \textbf{12*}--14 & \textbf{11}--13 & \textbf{10*}--13 & \textbf{9}--12   \\
                28              & \textbf{17}--20 & \textbf{17}--18 & \textbf{17}      & \textbf{14}--16  & \textbf{14}--16  & \textbf{12}--15  & \textbf{12}--14 & \textbf{11*}--13 & \textbf{9}--12   \\
                29              & \textbf{18}--20 & \textbf{18}--19 & \textbf{17}--18  & \textbf{15}--17  & \textbf{14}--16  & \textbf{14}--16  & \textbf{13}--15 & \textbf{11*}--14 & \textbf{10}--13  \\
                30              & \textbf{19}--21 & \textbf{19}--20 & \textbf{17}--19  & \textbf{16*}--18 & \textbf{15}--17  & \textbf{14}--16  & \textbf{14}--16 & \textbf{12}--15  & \textbf{11*}--14 \\
                \noalign{\hrule height 0.8pt}\end{tabular}}\end{center}\end{table}

\begin{table}[thbp]\caption{$d_{4}(n,k)$ for $13 \le k \le 26$}\label{tab:F4-1}\begin{center}{\small\begin{tabular}{c|cccccccccccccccccccccccccccccccccccccccccccccccc}\noalign{\hrule height 0.8pt}
                $n\backslash k$ & 13              & 14             & 15             & 16             & 17             & 18            & 19            & 20            & 21            & 22            & 23            & 24         & 25         & 26 \\
                17              & 3--4            &                &                &                &                &               &               &               &                                                                              \\
                18              & 4               & 3              &                &                &                &               &               &               &                                                                              \\
                19              & 5               & 4              & 3              &                &                &               &               &               &                                                                              \\
                20              & 5--6            & 5              & 4              & 3              &                &               &               &               &                                                                              \\
                21              & 6               & 5--6           & 5              & 4              & 3              &               &               &               &                                                                              \\
                22              & 6--7            & 6              & 5--6           & 4--5           & 4              & 3             &               &               &                                                                              \\
                23              & 6--8            & 6--7           & 6              & 5--6           & 4--5           & 4             & 3             &               &                                                                              \\
                24              & 7--9            & 6--8           & 6--7           & 6              & 5--6           & 4--5          & \textbf{4}--3 & 3             &                                                                              \\
                25              & 7--9            & 7--9           & 6--8           & 6--7           & 6              & 5--6          & 4             & \textbf{4}    & 3                                                                            \\
                26              & 8--10           & \textbf{7}--9  & \textbf{7}--9  & \textbf{6}--8  & \textbf{6}--7  & \textbf{6}    & \textbf{5}--6 & \textbf{4}--5 & \textbf{4}    & 3                                                            \\
                27              & \textbf{9}--11  & \textbf{8}--10 & \textbf{7}--9  & \textbf{7}--9  & \textbf{6}--8  & \textbf{6}--7 & \textbf{5}--6 & \textbf{5}--6 & \textbf{4}--5 & \textbf{4}    & 3             &            &            &    \\
                28              & \textbf{9}--12  & \textbf{9}--11 & \textbf{8}--10 & \textbf{7}--9  & \textbf{7}--8  & \textbf{6}--8 & \textbf{6}--7 & \textbf{5}--6 & \textbf{5}--6 & \textbf{4}--5 & \textbf{4}    & 3          &            &    \\
                29              & \textbf{9}--12  & \textbf{9}--12 & \textbf{9}--11 & \textbf{8}--10 & \textbf{7}--9  & \textbf{7}--8 & \textbf{6}--8 & \textbf{6}--7 & \textbf{5}--6 & \textbf{5}--6 & \textbf{4}--5 & \textbf{4} & 3          &    \\
                30              & \textbf{10}--13 & \textbf{9}--12 & \textbf{9}--12 & \textbf{9}--10 & \textbf{8}--10 & \textbf{7}--9 & \textbf{7}--8 & \textbf{6}--8 & \textbf{6}--7 & \textbf{5}--6 & \textbf{5}--6 & \textbf{4} & \textbf{4} & 3  \\
                \noalign{\hrule height 0.8pt}\end{tabular}}\end{center}\end{table}

\bibliography{main}

\end{document}